\documentclass[11pt]{elsarticle}%
\usepackage{algorithm}
\usepackage{algpseudocode}
\usepackage{amssymb}
\usepackage{amsfonts}
\usepackage{amsmath}
\usepackage{graphicx}%
\usepackage[caption=false]{subfig}
\setcounter{MaxMatrixCols}{30}
\providecommand{\U}[1]{\protect\rule{.1in}{.1in}}
%EndMSIPreambleData
\setlength{\textheight}{240mm}
\setlength{\textwidth}{170mm}
\addtolength{\topmargin}{-2.5cm}
\addtolength{\oddsidemargin}{-2.4cm}

%%%%%%%%%%%%%%%%%TIKZ%%%%%%%%%%%%%%%%%
\usepackage{tikz}
\usetikzlibrary{shapes}
\usepackage{pgfplots}
\usepackage{verbatim}
\usepgfplotslibrary{statistics}
\pgfplotsset{compat=1.8}
%%%%%%%%%%%%%%%%%%%%%%%%%%%%%%%%%%%%%%
\usepackage{amsmath,amssymb,amsthm}
\usepackage{tikz}
\usetikzlibrary{positioning}
\usepackage[numbered]{matlab-prettifier}

\usepackage{url}
\usepackage{hyperref}

\definecolor{dgreen}{rgb}{0.078,0.418,0.184}

\usepackage{graphicx}
\usepackage{array}
\usepackage{booktabs}

\newtheorem{theorem}{Theorem}

\newtheorem{lemma}[theorem]{Lemma}

\newtheorem{problem}[theorem]{Problem}

\newtheorem*{construction}{Construction}

\renewcommand{\P}{\textnormal{Pr}}

\renewcommand{\O}{\mathcal{O}}

\begin{document}
	
\title{Minimizing the effective graph resistance by adding links is NP-hard}

\author[1,2]{Robert E. Kooij\corref{cor1}}
\ead{R.E.Kooij@tudelft.nl}
\author[1]{Massimo A. Achterberg}
\ead{M.A.Achterberg@tudelft.nl}
%\author[3]{CV Rajagopal\fnref{fn1,fn3}}
%\ead[url]{www.stmdocs.in}
\cortext[cor1]{Corresponding author}
%\fntext[fn1]{M.A.Achterberg@tudelft.nl.}
%\fntext[fn2]{R.E.Kooij@tudelft.nl}
\affiliation[1]{organization={Faculty of Electrical Engineering, Mathematics and Computer Science, Delft University of Technology},
addressline={\\P.O. Box 5031},
postcode={2600 GA},
city={Delft},
country={The Netherlands}}
\affiliation[2]{organization={Unit ICT, Strategy \& Policy, TNO (Netherlands Organisation for Applied Scientific Research)},
addressline={\\P.O. Box 96800},
postcode={2509 JE},
city={The Hague},
country={The Netherlands}}

\begin{abstract}
    The effective graph resistance, also known as the Kirchhoff index, is metric that is used to quantify the robustness of a network. We show that the optimisation problem of minimizing the effective graph resistance of a graph by adding a fixed number of links, is NP-hard.
\end{abstract}

\begin{keyword}
effective graph resistance \sep graph augmentation
\end{keyword}

\maketitle

\section{Introduction}
Many network metrics have been utilised to quantify the robustness of a network, see for instance \cite{cats},~\cite{hale},~\cite{yakup},~\cite{jose},   \cite{schneider}. Freitas \textit{et al.}~\cite{freitas} classify robustness metrics into three types: 
metrics based on structural properties, such as edge connectivity or diameter; metrics based on the spectrum of the adjacency matrix, such as the spectral radius or spectral gap; and metrics based on the spectrum of the Laplacian matrix, for instance the algebraic connectivity and the effective graph resistance. In this paper we consider the following optimization problem: how to augment a given graph $G$ by adding at most $k$ links, such that the robustness metric of the augmented network is optimal.
As robustness metric we consider the effective graph resistance $R_G$, also known as the Kirchhoff index, see Ellens \textit{et al.}~\cite{ellens2011graphresistance}. The effective graph resistance not only covers the shortest path between any pair of nodes, but incorporates all paths between any two nodes. Because in addition $R_G$ decreases upon the addition of an edge to the graph \cite{ghosh2008graphresistance}, this makes the effective graph resistance a good metric to evaluate the robustness of a network. 
\newline
Predari \textit{et al.} refer to the optimization problem at hand as $k$-Graph Robustness Improvement Problem (k-GRIP) \cite{predari2020graphresistance},  in which one has to decide where $k$ links are to be added to a given network $G$, such that the robustness metric is optimised. 
Several researchers investigated $k$-GRIP for specific robustness metrics. For instance, Wang \textit{et al.} \cite{wang2008algebraic} considered $1$-GRIP, with as robustness metric the second-smallest eigenvalue of the Laplacian matrix, which was coined the algebraic connectivity by Fiedler~\cite{fiedler}. They suggest several strategies to decide which single link to add to the network, in order to increase the algebraic connectivity as much as possible. A nice overview of $k$-GRIP for the algebraic connectivity is presented in \cite{li2018algebraicconnectivity}. The NP-hardness of $k$-GRIP for the algebraic connectivity was proved in \cite{AC_NP}.
\newline
\newline
For the effective graph resistance, 1-GRIP was considered by Wang \textit{et al.}~
\cite{wang2014graphresistance}. They investigated different strategies, based upon topological and spectral properties of the graph, to determine the most optimal link to add, and derived a lower bound for $R_G$ after adding a single link.  Pizzuti \textit{et al.}~\cite{pizzuti2018graphresistance}, ~\cite{Pizzuti2023} proposed and evaluated several genetic algorithms to find the most optimal edge to add, in order to minimize $R_G$. Clemente \textit{et al.}~\cite{clemente2020graphresistance} studied $k$-GRIP for the effective graph resistance and gave lower bounds for $R_G$ upon the addition of $k$ links, under some mild conditions for $k$. For $k = 1$ the lower bound in \cite{clemente2020graphresistance} clearly outperforms the lower bound in \cite{wang2014graphresistance}. Predari \textit{et al.} \cite{predari2020graphresistance} also consider $k$-GRIP for the effective graph resistance. They focus on heuristics for $k$-GRIP based upon sampling and a fast approximation method to compute the effective graph resistance.
\newline
Although for some choices of the robustness metric, $k$-GRIP is known to be NP-hard, to the best of our knowledge this has not been proved yet for the effective graph resistance. The aim of this paper is to prove that augmenting a given graph $G$ by adding $k$ links, in order to minimize the effective graph resistance, is NP-hard.

\section{Definitions and main result}
In this paper we consider undirected, connected simple graphs $G=(V,E)$ without self-loops. Here $V$ denotes the set of $N$ vertices, while $E$ is the set of $L$ links connecting vertex pairs of $V$. The notation $i \sim j$ indication that nodes $i$ and $j$ are adjacent in $G$. We let $G^c = (V,E^c)$ denote the complementary graph of $G$, where $E^c = \{(u,v) | u, v \in V, u \neq v, (u,v) \not\in E\}$. 
The adjacency matrix $A$ of $G$ is an $N \times N$ symmetric matrix with elements $a_{ij}$ that are either 1 or 0 depending on whether there is a
link between nodes $i$ and $j$ or not. The Laplacian matrix $Q$ of $G$ is an $N \times N$ symmetric matrix $Q=\Delta - A$, where $\Delta = diag(d_i)$ is the $N \times N$ diagonal degree matrix with the elements $d_i = \sum_{j=1}^N a_{ij}$.  The eigenvalues of $Q$ are all real and non-negative and can be ordered as $0=\lambda_1 \leq \lambda_2 \leq \cdots \leq \lambda_N$.

Interpreting the graph $G$ as an electrical network whose edges are resistors of $1\Omega$, the effective resistance~$\omega_{ij}$ between node $i$ and $j$ can be computed based on Kirchoff's circuit laws. Then the \emph{effective graph resistance}~$R_G$, also known as the \emph{Kirchhoff index}, is defined as the sum of the resistances over all node pairs \cite{kleinrandic1993resistance}
\begin{equation}\label{eq_RG2}
    R_G(G) = \sum_{1 \leq i < j \leq N} \omega_{ij}.
\end{equation}
Ellens \textit{et al.}~\cite{ellens2011graphresistance} showed that the effective graph resistance can also be computed using the Laplacian eigenvalues $\lambda_k$ of the graph $G$ as
\begin{equation}\label{eq_RG}
    R_G(G) = N\sum_{k=2}^N\frac{1}{\lambda_k}.
\end{equation}
Ellens \textit{et al.}~\cite{ellens2011graphresistance} argued that the effective graph resistance is an appropriate robustness metric. Note that the smaller the value of $R_G$ the larger the robustness of the network. The smallest value of the effective graph resistance for a graph on $N$ nodes is obtained for the complete graph $K_N$ and satisfies $R_G(K_N) = N - 1$. 
We will show in this paper that adding a specified number of edges to a given graph, in order to minimize the effective graph resistance, is NP-hard. We will now give an explicit description of the considered optimization problem.

\begin{problem}[Minimum effective graph resistance augmentation problem]
    Given an undirected, connected, simple graph $G = (V,E)$, a non-negative integer $k$ and a non-negative threshold $t$, is there a subset $B \subseteq E^c$ of size $|B| \leq k$ such that the graph $H = (V, E \cup B)$ satisfies $R_G(H) \leq t$?
    \label{problem_RG}
\end{problem}

Problem~\ref{problem_RG} is clearly in NP, because given a graph $G$ and the set of added edges $B$, the correctness of the given solution can be verified by computing the eigenvalues of the Laplacian matrix, which is an $\O(N^3)$ operation. Then simply computing \eqref{eq_RG} and comparing the outcome with the given threshold~$t$ verifies the solution. Thus the minimum effective graph resistance augmentation problem is in NP.

Problem~\ref{problem_RG} is the decision version of the following optimisation problem: Given an undirected, connected, simple graph $G=(V,E)$ and a non-negative threshold~$t$, find a set of currently non-existent edges of minimum size to add to $G$ such that the effective graph resistance $R_G$ of the augmented graph is at most $t$. We prove in this work that Problem~\ref{problem_RG} is NP-hard, which immediately implies that the corresponding optimisation problem is also NP-hard. Thus, the problem of adding a specified number of edges to a graph to minimize the effective graph resistance is also NP-hard. We now state the main result of the paper.

\begin{theorem}
The minimum effective graph resistance augmentation
problem is NP-hard.
\label{T1}
\end{theorem}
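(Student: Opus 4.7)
The plan is to prove NP-hardness by polynomial-time reduction from a known NP-hard combinatorial problem. Because the effective graph resistance $R_G$ is a global spectral quantity that strictly decreases with each edge addition, any reduction must carefully control how a particular choice of $k$ edges translates into a value of $R_G$. A natural source problem, well matched to selecting a fixed number of edges under combinatorial constraints, is 3-Dimensional Matching (3DM): given three disjoint sets $X, Y, Z$ each of size $q$ and a family $T \subseteq X \times Y \times Z$, decide whether there exists a subfamily of $q$ triples whose union is $X \cup Y \cup Z$. 3DM is known to be NP-complete, and its ``choose $q$ objects to exactly cover $3q$ elements'' structure maps naturally to picking a prescribed number of edges under combinatorial constraints.

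From a 3DM instance I would construct a graph $G$ using three ingredients: a large ``core'' subgraph (for instance a clique $K_M$ with $M$ polynomially large) that keeps $G$ connected and dominates the spectrum; one vertex $v_e$ per element $e \in X \cup Y \cup Z$, pendant-attached to the core; and one vertex $v_\tau$ per triple $\tau \in T$, also attached to the core. The edges in $E^c$ eligible for addition are precisely $\{v_\tau, v_e\}$ whenever $e \in \tau$, and I would set $k = 3q$. In the YES direction, the $3q$ edges coming from a valid 3DM cover produce a balanced augmentation whose $R_G$ can be computed or tightly bounded from above by exploiting the symmetry of the construction. In the NO direction, any choice of $3q$ eligible edges must either leave some element vertex uncovered by any activated triple or pack multiple edges into a single triple vertex, producing a strictly larger $R_G$. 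One then sets the threshold $t$ strictly between these two values.

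The hardest step is the NO direction. Since $R_G$ decreases monotonically with each edge addition, it is not a priori clear that the YES and NO cases are separated by an inverse-polynomial gap: an adversarial selection of ``wrong'' edges can still reduce $R_G$ somewhat. The key will be to exploit the Kirchhoff formula~\eqref{eq_RG} and to choose $M$ large enough that pair-resistances inside the core are negligible, so that $R_G$ is essentially determined by the pendant-like structure attaching element and triple vertices to the core, reducing the bounding problem to a combinatorial sum that directly mirrors the 3DM structure. The tools I expect to rely on are Foster's theorem ($\sum_{(i,j)\in E} \omega_{ij} = N-1$), the Sherman--Morrison update of the Laplacian pseudoinverse under rank-one edge additions, and the known convexity/monotonicity properties of $R_G$, all used to quantify the strict marginal loss incurred when the added edges deviate from the 3DM pattern.
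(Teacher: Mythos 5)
Your proposal is a plan, not a proof, and the step you yourself flag as ``the hardest'' --- establishing a quantifiable gap in $R_G$ between the YES and NO cases of your 3DM reduction --- is precisely the step that is missing and whose feasibility is far from clear. Concretely: with a clique core $K_M$ and pendant attachments, the contribution of the element and triple vertices to $R_G$ is only \emph{approximately} a separable combinatorial sum; the cross-terms and the core's own resistances introduce error terms that you would have to bound below the combinatorial gap, which itself shrinks polynomially in $q$ and $M$. Moreover, your construction does not obviously enforce the ``whole triple'' structure of 3DM: any $3q$ eligible edges covering each element exactly once equalize the element-vertex degrees, so the only thing distinguishing a genuine matching from a cheating selection is the degree distribution over the \emph{triple} vertices, and you would need a second, simultaneous convexity argument there (minimizing $\sum_\tau 1/(1+a_\tau)$ forces concentration into $q$ triples of weight $3$ each) together with a proof that both optima can be attained simultaneously only in the YES case. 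None of this is carried out, and Foster's theorem plus Sherman--Morrison updates do not by themselves deliver the required two-sided quantitative control.

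The paper avoids all of this by reducing from 3-colorability instead, reusing the construction of \cite{AC_NP}: take three disjoint copies of $G$ and ask for $k = 3n^2 - 3m$ added edges with threshold $t = \frac{9n-5}{2}$. The crucial point is that this threshold is not a perturbative estimate but an \emph{exact} value: $R_G(K_{n,n,n}) = \frac{9n-5}{2}$, and this coincides with the sharp Milovanović lower bound $R_G(H) \geq \frac{N^2(N-1)}{2L} - 1$ evaluated at $N = 3n$, $L = 3n^2$. The equality case of that bound is rigid (only $K_N$, $K_{N/2,N/2}$, or the class $\Gamma_d$ of regular diameter-$2$ graphs with constant codegree), and a short counting argument eliminates all candidates except $K_{n,n,n}$ (Lemma~\ref{main}); Lemma~\ref{part1} then ties reachability of $K_{n,n,n}$ to 3-colorability. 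In other words, the paper trades your approximate spectral gap analysis for an exact extremal characterization, which is what makes the NO direction trivial there. If you want to salvage your approach, you would need either a similarly rigid extremal statement for your gadget or a fully worked-out error analysis; as written, the argument does not constitute a proof.
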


\section{Proof of Theorem \ref{T1}}
The proof of Theorem \ref{T1} heavily relies on the proof of the NP-hardness of the maximum algebraic connectivity augmentation problem, as given in \cite{AC_NP}. The proof is by reduction of our augmentation problem to a problem for which NP-hardness has been proved, namely the 3-colorability problem, see \cite{GAREY1976237}.
For our proof we will use a construction and a lemma from \cite{AC_NP} and two additional lemma's.

\begin{construction}~\cite{AC_NP}
Given a graph $G=(V,E)$ with $n > 1$ vertices and $m$ edges, a graph $G' = (V',E')$ is constructed which consists of three disjoint copies $G_0, G_1$ and $G_2$ of $G$. This implies that each vertex $v \in V$ is copied to a vertex $v_i \in G_i$ and each edge $(u,v) \in E$ is copied to $(u_i,v_i) \in G_i$, for $i=1,2,3$. By construction the graph $G'$ has $3n$ vertices and $3m$ edges. We now consider the minimum effective graph resistance augmentation problem on $G'$ with $k = 3n^2 - 3m$, such that the augmented graph $H$ has at  most $3n^2$ edges and $t = \frac{9n-5}{2}$.
\end{construction}

Now, let $K_{n,n,n}$ denote the complete tripartite graph. In order to prove that the minimum effective graph resistance augmentation problem can be reduced to the 3-colorability problem, we will use the following three lemmas.

\begin{lemma}~\cite{AC_NP}
There exists a subset $B \subseteq (E')^c$ of size $|B| \leq k$ such that $H = (V',E' \cup B)$ is (isomorphic to) $K_{n,n,n}$ if and only if $G$ is 3-colorable.
\label{part1}
\end{lemma}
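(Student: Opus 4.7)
The plan is to prove each direction of the biconditional separately. A preliminary observation is that $K_{n,n,n}$ has exactly $3n^2$ edges while $G'$ has $3m$ edges, so any $B \subseteq (E')^c$ with $|B|\leq k = 3n^2-3m$ yielding $H\cong K_{n,n,n}$ must in fact use $|B|=k$ edges. This pins down the edge budget and shows that the inequality in the lemma statement is tight.

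For the forward direction, starting from a proper $3$-coloring of $G$ with color classes $C_0, C_1, C_2$ (some possibly empty), I would distribute these classes over the three copies by a cyclic shift: set
\[
    P_j = \bigcup_{i=0}^{2} \{v_i : v \in C_{(j+i) \bmod 3}\}, \qquad j = 0, 1, 2.
\]
This is a Latin-square style assignment, so every $P_j$ has size $|C_0|+|C_1|+|C_2|=n$, and the vertices of $P_j$ inside each copy $G_i$ form a single color class of $G_i$, hence an independent set of $G_i$. Since $G'$ has no edges between distinct copies, each $P_j$ is also independent in $G'$. Adding every edge between distinct parts that is not already present in $G'$ therefore produces a graph isomorphic to $K_{n,n,n}$, and uses exactly $3n^2-3m=k$ new edges.

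For the converse, suppose $B$ with $|B|\leq k$ realizes $H \cong K_{n,n,n}$, and let $P_0, P_1, P_2$ be the three parts of the corresponding tripartition. Each $P_j$ is independent in $H$ and therefore independent in the subgraph $G' \subseteq H$. Intersecting with the vertex set $V_0$ of the first copy yields a partition of $V_0$ into three independent sets of $G_0 \cong G$, which is precisely a proper $3$-coloring of $G$ (allowing empty color classes).

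The main obstacle is the forward construction. The naive attempt $P_j = \bigcup_i \{v_i : v \in C_j\}$ gives $|P_j| = 3|C_j|$, which generally differs from $n$, so the resulting graph is not balanced and cannot be $K_{n,n,n}$. The cyclic shift is the essential device that evens out the part sizes to exactly $n$ regardless of how unbalanced the original coloring is, while preserving independence within each copy.
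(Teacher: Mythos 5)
Your proof is correct and is essentially the argument of Mosk-Aoyama~\cite{AC_NP}, which this paper cites for Lemma~\ref{part1} without reproducing the proof: the cyclic-shift (Latin-square) assignment of color classes to the three copies is exactly the device used there to balance the parts, and your converse via intersecting the tripartition with one copy is the standard one. No gaps.
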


\begin{lemma}~\cite{Milovanovic} Let G be a simple connected graph with $N \geq 2$ vertices and $L$ edges. Then
\begin{equation}
R_G(G) \geq \frac{N^2(N-1)}{2L} - 1,
\end{equation}
with equality if and only if $G \cong K_N$, or $G \cong K_{N/2,N/2}$, or $G \in \Gamma_d$.
\label{Milovanovic}
\end{lemma}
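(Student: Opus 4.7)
Since the inequality is attributed to Milovanovi\'c~\cite{Milovanovic} and used in what follows only as a tool, my plan is to reconstruct the main spectral estimate and defer the finer equality characterisation to the cited source. The starting point is the spectral formula~\eqref{eq_RG}, namely $R_G(G)=N\sum_{k=2}^N 1/\lambda_k$, together with the trace identity $\sum_{k=2}^N \lambda_k = \mathrm{tr}(Q)=2L$ and the standard bound $\lambda_N \le N$ that holds for every simple graph on $N$ vertices (with equality iff $G^c$ is disconnected).

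The first step I would carry out is a straight Cauchy--Schwarz application,
\[
(N-1)^{2}=\Bigl(\sum_{k=2}^{N}1\Bigr)^{2}\le\Bigl(\sum_{k=2}^{N}\lambda_{k}\Bigr)\Bigl(\sum_{k=2}^{N}\frac{1}{\lambda_{k}}\Bigr)=2L\cdot\frac{R_G(G)}{N},
\]
which already yields the baseline bound $R_G(G)\ge N(N-1)^{2}/(2L)$. This coincides with the claim at $K_N$ but is weaker in general by $\bigl(N(N-1)-2L\bigr)/(2L)$, a quantity that grows with the number of non-edges. To close this gap, I would isolate the largest Laplacian eigenvalue, bound its reciprocal from below by $1/N$ using $\lambda_N\le N$, and apply Cauchy--Schwarz only to the remaining values $\lambda_2,\ldots,\lambda_{N-1}$. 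Once this split is optimised, the contribution $N/\lambda_N$ produces the additive $+1$ on the $R_G$ side and turns the leading constant into $N^{2}(N-1)/(2L)$.

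The step I expect to be the main obstacle is the equality characterisation. Cauchy--Schwarz is tight only when all participating eigenvalues coincide, while saturating $\lambda_N=N$ is equivalent to $G^{c}$ being disconnected; together these two conditions force the Laplacian spectrum to have a very restricted shape, and a short case analysis singles out exactly $K_N$, $K_{N/2,N/2}$ and the family $\Gamma_d$ identified in~\cite{Milovanovic}. Since the lemma is used here only to produce a numerical lower estimate for $R_G(H)$ in the NP-hardness reduction, the full identification of $\Gamma_d$ can be taken directly from~\cite{Milovanovic} without further reproduction.
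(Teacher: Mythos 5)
The paper offers no proof of this lemma at all: it is imported verbatim from the cited reference \cite{Milovanovic}, so the only question is whether your reconstruction is sound. Your baseline Cauchy--Schwarz step and your computation of the deficit $\frac{N(N-1)-2L}{2L}$ are both correct, but the proposed repair fails at the decisive step. After splitting off the largest eigenvalue you have
\[
\frac{R_G(G)}{N}\;\ge\;\frac{1}{\lambda_N}+\frac{(N-2)^2}{2L-\lambda_N},
\]
and while $\lambda_N\le N$ does give $1/\lambda_N\ge 1/N$, the second term requires an \emph{upper} bound on $2L-\lambda_N$, i.e.\ a \emph{lower} bound on $\lambda_N$ -- the inequality $\lambda_N\le N$ points the wrong way. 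The only generic lower bound, $\lambda_N\ge 2L/(N-1)$, yields $R_G(G)\ge 1+\frac{N(N-1)(N-2)}{2L}$, which falls short of the claimed bound by exactly $\frac{N(N-1)-2L}{L}$, strictly positive unless $G\cong K_N$. This is not a cosmetic loss: for the graphs of Lemma~\ref{main} ($N=3n$, $L=3n^2$) your bound gives only $R_G\ge\frac{9n-7}{2}+\frac{1}{n}<\frac{9n-5}{2}$, so it cannot support the reduction. Your equality analysis also breaks independently: tightness of Cauchy--Schwarz over $\lambda_2,\dots,\lambda_{N-1}$ forces those $N-2$ eigenvalues to coincide, but $K_{n,n,n}$ has Laplacian spectrum $\{0,(2n)^{(3n-3)},(3n)^{(2)}\}$, so $\lambda_{N-1}=3n\neq 2n=\lambda_2$; your criterion would therefore exclude precisely the member of $\Gamma_d$ on which the whole paper relies.

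A route that does close the gap, and reproduces the stated equality cases, is the quadratic trick: since $0<\lambda_k\le N$, for every real $a$ one has $0\le\sum_{k=2}^{N}\frac{(N-\lambda_k)(\lambda_k-a)^2}{\lambda_k}=a^2\bigl(R_G-(N-1)\bigr)-2a\bigl(N(N-1)-2L\bigr)+\bigl(2LN-\sum_{k=2}^N\lambda_k^2\bigr)$, so the discriminant is nonpositive and $R_G\ge N-1+\frac{(N(N-1)-2L)^2}{2LN-\sum_k\lambda_k^2}$. Combining this with $\sum_{k=2}^N\lambda_k^2=\sum_i d_i^2+2L\ge \frac{4L^2}{N}+2L$ gives exactly $R_G\ge\frac{N^2(N-1)}{2L}-1$. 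Equality then forces $G$ regular with Laplacian spectrum contained in $\{0,a,N\}$, which is the correct shape: it admits $K_N$, $K_{N/2,N/2}$ and the complete regular multipartite graphs in $\Gamma_d$, with the final identification of $\Gamma_d$ deferred to \cite{Milovanovic} as you intended.
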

Here, $\Gamma_d$ denotes a special class of $d$-regular graphs defined in \cite{Palacios}. 
Let $M(i)$ be the set of all neighbours of the vertex $i$, that is, $M(i) = \{k | k \in  V, k \sim i\}$, where $V$ denotes the set of vertices of the graph. Then for every $1 \leq  d \leq n-1$ the set $\Gamma_d$ denotes the set of all $d$-regular graphs with diameter~2 and satisfying $|M(i) \cap M(j)| = d$ for every pair of vertices $i, j$ that are not adjacent, i.e. $i \not\sim j$.

\begin{lemma}
    A graph $H = (V,E)$ with $N=3n$ vertices and $L \leq 3n^2$ edges for $n > 1$ satisfies $R_G(H) \leq \frac{9n-5}{2}$ if and only if $H$ is (isomorphic to) $K_{n,n,n}$.
    \label{main}
\end{lemma}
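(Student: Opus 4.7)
The plan is to prove both directions separately. For the easy direction, I would verify that $K_{n,n,n}$ satisfies the stated parameters: it has $3n$ vertices, $3n^2$ edges, and Laplacian spectrum consisting of $0$ (simple), $2n$ (with multiplicity $3n-3$), and $3n$ (with multiplicity $2$). Inserting this into \eqref{eq_RG} gives
\begin{equation*}
R_G(K_{n,n,n}) = 3n\left(\frac{3n-3}{2n} + \frac{2}{3n}\right) = \frac{9(n-1)}{2} + 2 = \frac{9n-5}{2},
\end{equation*}
so the inequality holds with equality.

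For the converse, assume $H$ has $N = 3n$ vertices, $L \leq 3n^2$ edges, and $R_G(H) \leq \frac{9n-5}{2}$. I would apply Lemma \ref{Milovanovic} together with the bound on $L$:
\begin{equation*}
R_G(H) \geq \frac{N^2(N-1)}{2L} - 1 \geq \frac{9n^2(3n-1)}{6n^2} - 1 = \frac{9n-5}{2}.
\end{equation*}
Since the upper and lower bounds coincide, both inequalities must be tight. Tightness of the edge inequality forces $L = 3n^2$, while equality in the Milovanovic bound forces $H \cong K_N$, $H \cong K_{N/2,N/2}$, or $H \in \Gamma_d$ for some $d$. I would rule out the first two alternatives: $K_{3n}$ has $\binom{3n}{2}$ edges, which equals $3n^2$ only for $n=1$, and $K_{3n/2,3n/2}$ (when $n$ is even) has $9n^2/4 \neq 3n^2$ edges. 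Hence $H \in \Gamma_d$, and by $d$-regularity $3nd = 2L = 6n^2$, giving $d = 2n$.

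The main remaining step is to show that any $H \in \Gamma_{2n}$ on $3n$ vertices is isomorphic to $K_{n,n,n}$. For a non-adjacent pair $i \not\sim j$, the definition of $\Gamma_{2n}$ gives $|M(i) \cap M(j)| = 2n$, while $|M(i)| = |M(j)| = 2n$, so inclusion-exclusion forces $|M(i) \cup M(j)| = 2n$ and hence $M(i) = M(j)$. I would then introduce the equivalence relation $i \equiv j$ iff $M(i) = M(j)$, observe that each class is an independent set (since $i \in M(j)$ would contradict $i \notin M(i) = M(j)$), and note that every non-neighbour of $i$ lies in its class. Because $i$ has exactly $n-1$ non-neighbours, each class has size $1 + (n-1) = n$, giving three classes of size $n$; and any two vertices in different classes must be adjacent, since otherwise they would share a common neighbourhood and hence a class. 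This is precisely the structure of $K_{n,n,n}$.

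The hardest part will be this last combinatorial step: extracting the partition structure from the purely numerical condition $|M(i) \cap M(j)| = 2n$ built into $\Gamma_{2n}$. The key leverage is that the inclusion-exclusion bound upgrades $|M(i) \cap M(j)| = 2n$ to the much stronger identity $M(i) = M(j)$, after which the equivalence-class bookkeeping is routine.
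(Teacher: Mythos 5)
Your proposal is correct and follows essentially the same route as the paper: compute $R_G(K_{n,n,n}) = \frac{9n-5}{2}$, combine the hypothesis with Lemma~\ref{Milovanovic} to force equality and hence $L = 3n^2$, eliminate $K_{3n}$ and $K_{3n/2,3n/2}$, and upgrade $|M(i)\cap M(j)| = 2n$ to $M(i) = M(j)$ for non-adjacent pairs to recover the tripartition. The only (harmless) differences are that you compute $R_G(K_{n,n,n})$ from the Laplacian spectrum rather than from Gervacio's resistance formulas, and you organise the final combinatorial step via neighbourhood-equivalence classes instead of the paper's explicit vertex-labelling construction.
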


\begin{proof}
    First, we compute the effective graph resistance $R_G$ of the complete tripartite graph $K_{n,n,n}$ using Eq.\ \eqref{eq_RG2}. Gervacio \cite{gervacio2016graphresistance} derived the effective resistance between nodes in complete multipartite graphs as:
    \begin{align*}
        \omega_{ij} &= \frac{2}{N-m_i}, \qquad &\text{if } i,j \text{ are in the same partition} \\
        \omega_{ij} &= \frac{(N-1)(2N-m_i-m_j)}{N(N-m_i)(N-m_j)}, \qquad &\text{otherwise}
    \end{align*}
    where $m_i$ and $m_j$ represent the size of the partition of node $i$ and $j$ respectively. In our case, $N=3n$ and $m_i=m_j=n$. The number of node pairs in the same partition equals $3n(n-1)/2$, such that the number of pairs outside of the same partition equals $3n^2$. Then the effective graph resistance of the complete tripartite graph exactly equals $R_G(K_{n,n,n})=\frac{9n-5}{2}$.
\newline
\newline
Next, using $N=3n$ and $L \leq 3n^2$, according to Lemma \ref{Milovanovic} it follows $R_G(H) \geq \frac{9n^2(3n-1)}{6n^2}-1=\frac{9n-5}{2}$. By the condition $R_G(H) \leq \frac{9n-5}{2}$, we deduce that $R_G(H) = \frac{9n-5}{2}$. Also it follows that $L = 3n^2$ because $N=3n$ and $L < 3n^2$ would imply $R_G(H) > \frac{9n-5}{2}$ according to Lemma~\ref{Milovanovic}. Therefore the average degree of $H$ equals $2n$. Since $R_G(H)$ is equal to the lower bound given in Lemma~\ref{Milovanovic}, $H$ is either the complete graph $K_{3n}$, the complete bipartite graph $K_{3n/2,3n/2}$ or it is a $2n$-regular graph belonging to the class $\Gamma_{2n}$. First, assume $H \cong K_{3n}$. However, the number of links of $K_{3n}$ equals $\frac{3n(3n-1)}{2}$ which, for $n > 1$, is larger than $3n^2$, the number of links of $H$. Therefore 
$H \not\cong K_{3n}$. Next assume $H \cong K_{3n/2,3n/2}$, which can only hold for $n$ even. Then the number of links of $K_{3n/2,3n/2}$ equals $\frac{9n^2}{4}$
which is always smaller than $3n^2$, the number of links of $H$. Therefore 
$H \not\cong K_{3n/2,3n/2}$. Hence we conclude that the graph $H$ is $2n$-regular and belongs to the class $\Gamma_{2n}$.
\newline
\newline
We will now show that $H$ is isomorphic to $K_{n,n,n}$. We start with an arbitrary node of $H$ and label it as node 1. Because $H$ is $2n$-regular, node 1 has exactly $2n$ neighbours, see Fig.\ \ref{fig_NP1}. 

\begin{figure}[!ht]
\centering
    \subfloat[\label{fig_NP1}]{%
    \includegraphics[width=0.18\textwidth]{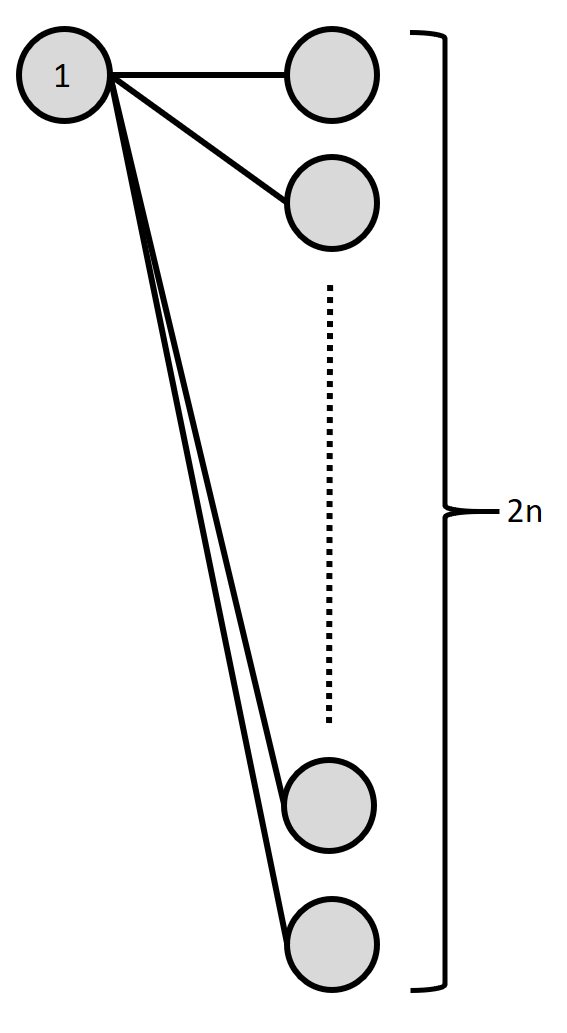}
    }\hspace{5mm}
    \subfloat[\label{fig_NP2}]{%
    \includegraphics[width=0.18\textwidth]{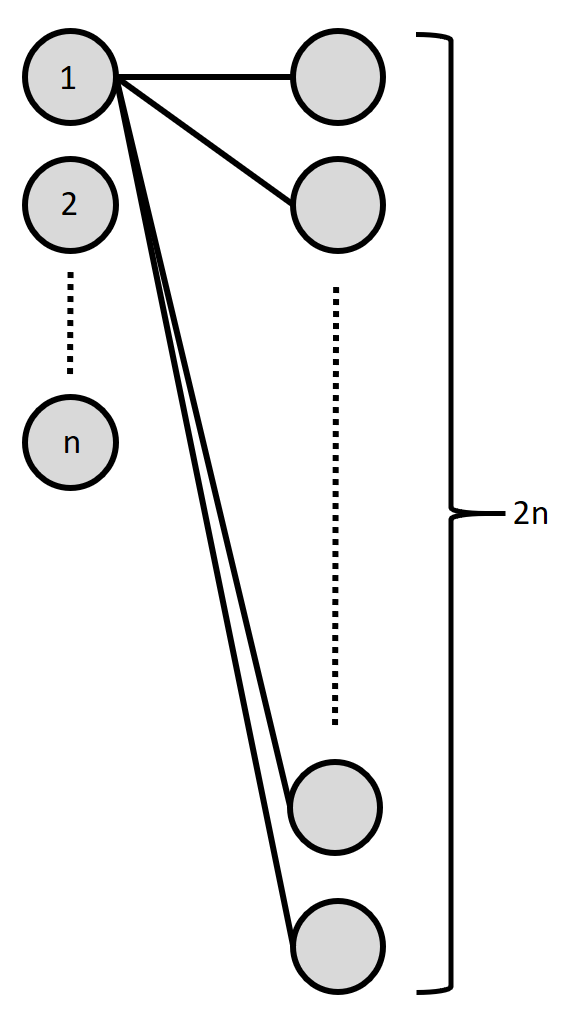}
    }\hspace{5mm}
    \subfloat[\label{fig_NP3}]{%
    \includegraphics[width=0.18\textwidth]{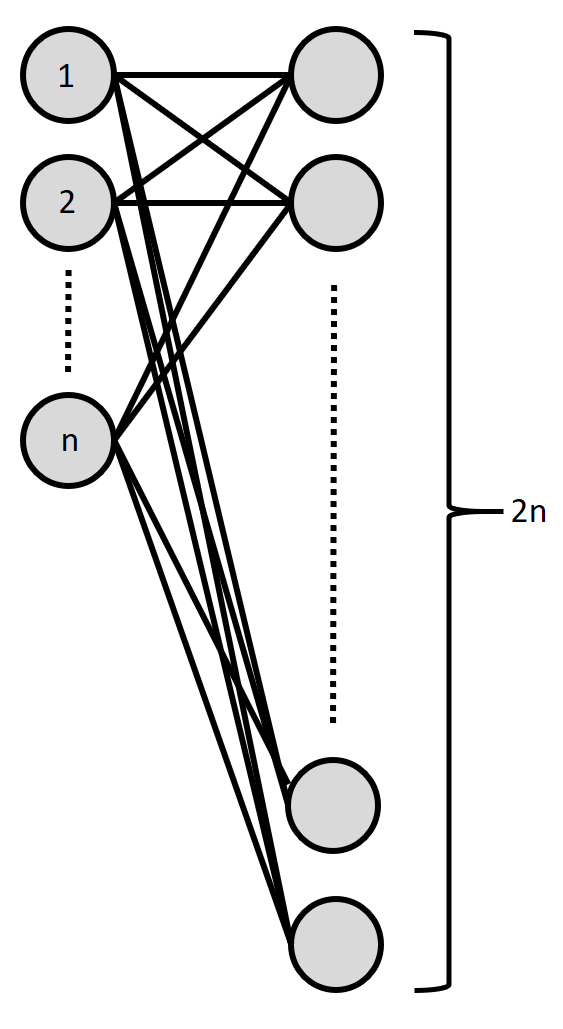}
    }
\caption{(a) Node 1 and its $2n$ neighbours. (b) Node 1, its $2n$ neighbours and the $n-1$ remaining nodes. (c) Nodes $\{1,\cdots,n\}$ and their connections to the other $2n$ nodes.}
\label{fig_NP123}
\end{figure}

The remaining $n-1$ nodes, other than node 1 and its $2n$ neighbours, cannot be adjacent to node~1 because it already has degree $2n$, by construction. We now label these nodes as nodes 2 until $n$, see Fig.\ \ref{fig_NP2}. Now, because $H$ belongs to the class $\Gamma_{2n}$ and nodes 2 until $n$ are not adjacent to node 1, each of the nodes 2 until $n$ has exactly the same neighbours as node 1, see Fig.\ \ref{fig_NP3}. 

Next, take an arbitrary node outside the set $\{1,2,\cdots,n\}$ and label it as $n+1$. To obtain degree $2n$, node $n+1$ needs to be adjacent to $n$ nodes outside the nodes $\{1,2,\cdots,n\}$. We label this set of $n$ adjacent nodes as $\{2n+1,\cdots,3n\}$, see Fig.\ \ref{fig_NP4}.

Finally, every node not in $\{1,2,\cdots,n+1\} \cup \{2n+1,\cdots,3n\}$ needs to share with node $n+1$ its neighbours $\{2n+1,\cdots,3n\}$, see Fig.\ \ref{fig_NP5}.

\begin{figure}[!ht]
\centering
    \subfloat[\label{fig_NP4}]{%
    \includegraphics[width=0.2\textwidth]{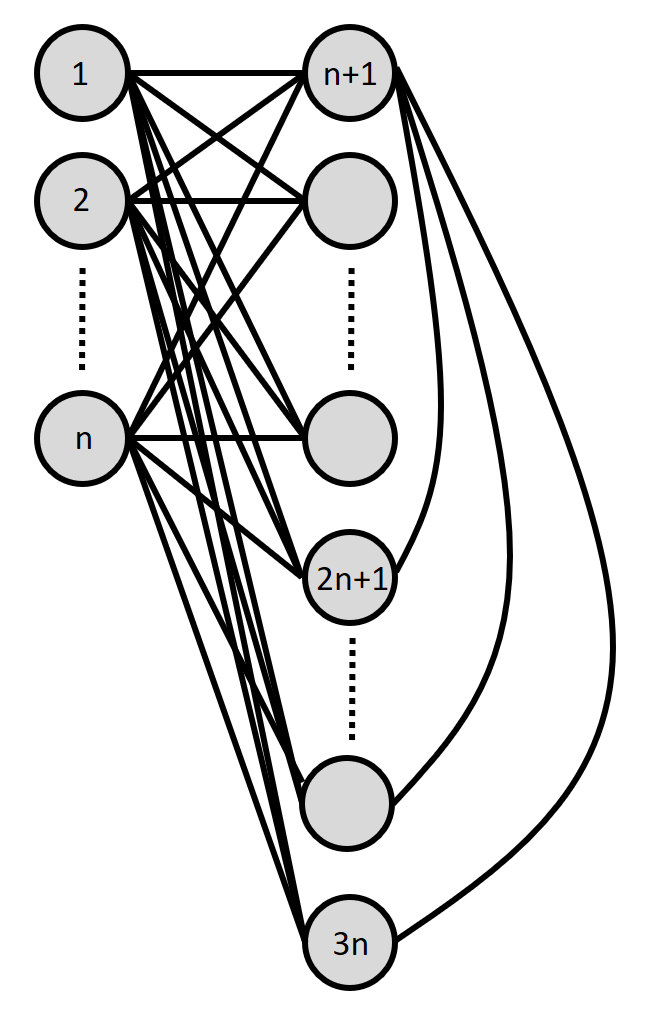}
    }\hspace{5mm}
    \subfloat[\label{fig_NP5}]{%
    \includegraphics[width=0.2\textwidth]{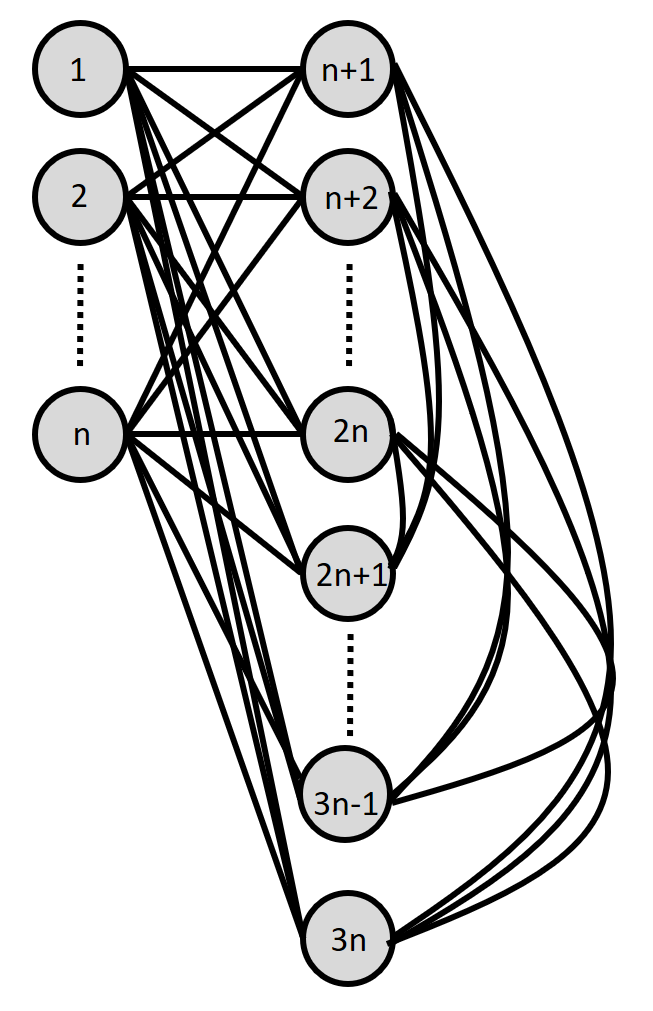}
    }
\caption{(a) Nodes $\{1,\cdots,n\}$, their connections to the other $2n$ nodes and the additional $n$ connections of node $n+1$. (b) All connections in graph $H$.}
\label{fig_NP45}
\end{figure}

Denote by $S_i$ the nodes labelled as $\{n(i-1)+1,n(i-1)+2,\cdots,n(i-1)+n\}$, for $i=1,2,3$. Then $|S_i| = n$, every node pair within $S_i$ is not adjacent and for every $i \neq j$ all nodes in $S_i$ are adjacent to all nodes in $S_j$. This proves that $H$ is a complete tripartite graph $K_{n,n,n}$. 
\end{proof}
Finally, Theorem \ref{T1} follows from combining Lemma \ref{part1} and \ref{main}.

{\footnotesize
    \bibliographystyle{apalike}

}

\section*{Statements \& Declarations}
\noindent
\textbf{Funding}
\newline
\textit{The authors declare that no funds, grants, or other support were received during the preparation of this manuscript}.
\newline
\newline
\textbf{Competing interests}
\newline
\textit{The authors have no relevant financial or non-financial interests to disclose}.
\newline
\newline
\textbf{Data availability}
\newline
\textit{No data was used for the research described in the article.}

%\textbf{Author Contributions}
%\newline
%\textit{Robert Kooij initiated the construction of the proof. Massimo Achterberg verified the proof. Robert Kooij wrote the initial draft. Massimo Achterberg improved the draft. Both authors read and approved the final version.}

\end{document}